\documentclass[11pt]{article}

\usepackage[margin=1in]{geometry}
\usepackage{amsmath,amssymb,amsthm,mathtools}
\usepackage{tikz}
\usepackage{tikz-cd}
\usepackage{listings}
\usepackage{enumitem}
\usepackage{hyperref}
\usepackage{natbib}
\usepackage{xcolor}

\newtheorem{theorem}{Theorem}[section]
\newtheorem{proposition}[theorem]{Proposition}

\newtheorem{definition}[theorem]{Definition}

\newtheorem{remark}[theorem]{Remark}

\renewcommand{\Form}{\mathsf{Form}}
\newcommand{\Machine}{\mathsf{Machine}}

\newcommand{\Mat}{\mathsf{materialize}}
\newcommand{\Quote}{\mathsf{quote}}
\newcommand{\Reflect}{\mathsf{reflect}}

\newcommand{\Dir}{\mathsf{Dir}}
\newcommand{\CapSet}{\mathsf{CapSet}}

\lstdefinelanguage{mashin}{
  keywords={machine, provides, implements, ensures, achieves, verifies, inputs, outputs,
            ask, compute, decide, recall, remember, launch, wait, quote, reflect,
            form, match, case, if, else, stores, expresses, flow, task, goal},
  keywordstyle=\bfseries,
  comment=[l]{\#},
  string=[b]",
  sensitive=true,
}
\title{Governed Metaprogramming for Intelligent Systems: \\
Reclassifying Eval as a Governed Effect}

\author{Alan L. McCann\\
\textit{Mashin, Inc.}\\
\texttt{research@mashin.live}}

\date{April 2026}

\begin{document}

\maketitle

\begin{abstract}
AI systems increasingly synthesize executable structure at runtime: LLMs
generate programs, agents construct workflows, self-improving systems modify
their own behavior. In classical homoiconic and staged languages, the
transition from code representation to execution is unrestricted.
\texttt{eval} is a language primitive, not a governed operation. We argue
that in governed intelligent systems, this transition is an authority
amplification: it converts symbolic structure into executable authority and
must be mediated like any other effect.

We present \emph{governed metaprogramming}, a language design where
program representations (machine forms) are first-class values, form
manipulation is pure computation, and materialization (the transition from
form to executable machine) is a governed effect subject to structural
inspection. The governance system analyzes the proposed program's capability
requirements, policy compliance, and resource estimates before permitting
execution. We formalize two judgments: pure form evaluation (which emits no
directives) and governed materialization (which emits exactly one governed
directive). We prove three properties: purity of form manipulation, the
no-bypass theorem, and boundary preservation. We implement the design in
MashinTalk, a DSL for AI workflows compiling to BEAM bytecode, and report on
integration with 454 existing machine-checked Rocq theorems.

The central contribution is reclassifying \texttt{eval} from a language
primitive into a governed effect.
\end{abstract}

\section{Introduction}
\label{sec:intro}

AI systems generate executable structure. An LLM writes a workflow. An agent
constructs a tool-calling plan. A self-improving system reflects on its own
behavior and produces a modified version of itself. In each case, something
that was data becomes something that runs.

This transition is the central problem of this paper. In classical
programming, the transition from data to executable code is handled by
\texttt{eval} (Lisp, 1960~\cite{mccarthy1960}), \texttt{run}
(MetaOCaml~\cite{kiselyov2014}), or equivalent operations. In all cases, the
transition is unrestricted: any valid representation can become running code.

We make three observations:

\begin{enumerate}[itemsep=2pt]
\item AI systems increasingly synthesize executable structure at runtime.
      This is not a future concern; it is the operational reality of agent
      frameworks, code-generating models, and self-improving systems.
\item Classical homoiconic and staged systems provide a path from code
      representation to execution that is not subject to a first-class
      governance boundary.
\item In governed intelligent systems (those with explicit effect boundaries,
      capability control, and audit requirements), unrestricted execution of
      generated structure breaks the governance model.
\end{enumerate}

The conclusion follows directly:

\begin{quote}
\emph{Generated structure must cross the same governed boundary as every other
effectful operation.}
\end{quote}

This paper provides the mechanism. We introduce \emph{machine forms}: first-class
values representing program structure. Form manipulation (inspection,
transformation, composition, diffing) is pure computation. The transition from
form to executable machine, which we call \emph{materialization}, is classified
as a governed effect. The governance system inspects the structural content of
the proposed program before permitting execution.

\paragraph{The key insight.}
\texttt{eval} is not merely computation. It is authority amplification.
When a form becomes running code, the system crosses from symbolic structure
into executable authority: the authority to invoke models, perform I/O, consume
resources, and affect external systems. Traditional homoiconic systems treat
this as normal language power. We argue that in governed intelligent systems, it
is a privileged transition and must be mediated.

\paragraph{The paradigm.}
This paper adds a fourth layer to the emerging paradigm of governed computation:

\begin{center}
\begin{tabular}{rl}
1. & Pure computation (no effects by construction) \\
2. & Governed effects (all I/O through governed directives) \\
3. & Governed evolution (version changes through evolution ledger) \\
4. & \textbf{Governed metaprogramming} (code generation through governed materialization)
\end{tabular}
\end{center}

Without layer 4, a reviewer of the first three layers can immediately ask:
``What if the system generates a new program with different capabilities?''
This paper answers that question.

\paragraph{Contributions.}
\begin{itemize}[itemsep=2pt]
\item The reclassification of \texttt{eval} from a language primitive into a
      governed effect, with a precise characterization of materialization as
      authority allocation (Section~\ref{sec:authority}).
\item Machine forms: a concrete data type for structured program
      representation with quote, splice, and reflection constructs
      (Section~\ref{sec:forms}).
\item Formal operational semantics with two judgments: pure form evaluation
      and governed materialization (Section~\ref{sec:semantics}).
\item Three machine-checked properties: form manipulation purity, the
      no-bypass theorem, and boundary preservation
      (Section~\ref{sec:verification}).
\item Governed self-modification: a concrete mechanism for machines that
      propose structural changes to themselves through an evolution ledger
      (Section~\ref{sec:selfmod}).
\end{itemize}

\section{Background}
\label{sec:background}

\subsection{Homoiconicity and Staged Computation}

A programming language is \emph{homoiconic} if its primary representation of
programs is a data structure in the language itself~\cite{kay1969}. In Lisp,
programs are S-expressions and \texttt{eval} executes them. Clojure extends
this with reader macros. Julia represents code as \texttt{Expr} objects.

Staged computation systems provide a related but distinct mechanism.
MetaOCaml~\cite{kiselyov2014} offers typed staging with \texttt{bracket},
\texttt{escape}, and \texttt{run}. Terra~\cite{devito2013} provides runtime
code generation. Racket's macro system~\cite{flatt2012} enables compile-time
metaprogramming.

Classical homoiconic and staged systems generally provide a path from code
representation to execution that is not itself subject to a first-class
governance boundary. Elixir's \texttt{quote}/\texttt{unquote} is
compile-time and does not provide runtime eval. MetaOCaml's \texttt{run} is
typed but unrestricted. The common pattern is that the transition from data
to code is treated as a language primitive rather than a privileged operation.

\subsection{The Mashin Runtime}

MashinTalk is a domain-specific language for AI workflows that compiles to
BEAM (Erlang VM) bytecode~\cite{mashin2026}. The runtime enforces eight
invariants, of which three are directly relevant:

\begin{itemize}[itemsep=1pt]
\item \textbf{Inv 1 (No Ambient Effects):} Pure computation steps cannot
      perform I/O. The capability is absent, not sandboxed.
\item \textbf{Inv 3 (Governance Mediation):} Every directive is mediated
      by governance. No bypass path exists.
\item \textbf{Inv 8 (Governed Evolution):} Machine versions are promoted
      only through the evolution ledger.
\end{itemize}

These invariants have been formalized and proved in 454 Rocq theorems across
36 modules with zero admitted lemmas~\cite{mashin-coq2026}.

\subsection{The Metaprogramming Escape Hatch}

The governance model described above governs execution: what a running
machine can do. But it does not, without this paper's contribution, govern
code generation: what executable structures the system can produce.

A system governed at the execution level but ungoverned at the code
generation level is vulnerable to a simple attack: generate a new program
that lacks the governance constraints of the original. If the generated
program can be executed without structural inspection, the governance model
has an escape hatch.

This is not a theoretical concern. LLM-powered agent systems routinely
generate executable structures (tool-calling plans, workflow
configurations, code) that execute without structural governance.

\section{Materialization as Authority Allocation}
\label{sec:authority}

We argue that the transition from code representation to executable program
is not ordinary computation. It is authority allocation.

\begin{definition}[Executable Authority]
A value has \emph{executable authority} if it can, when executed, invoke
external models, perform I/O, consume computational resources, or affect
systems beyond the current computation.
\end{definition}

A machine form is a map. It has no executable authority. It cannot invoke a
model, write a file, or send a network request. It is data.

A running machine has executable authority. Its steps invoke models
(\texttt{ask ... using:}), call effect machines (\texttt{ask ... from:}),
and produce traces recorded in the behavioral ledger.

Materialization is the operation that converts the former into the latter.
It allocates executable authority to a structure that previously had none.

\begin{remark}[Why Materialization is an Effect]
The chain of reasoning:
\begin{enumerate}
\item Materialization allocates executable authority.
\item Executable authority is a privileged runtime resource (it enables
      model invocation, I/O, and resource consumption).
\item Acquisition of privileged runtime resources is an effect.
\item Therefore, materialization is an effect.
\end{enumerate}
This is why a reviewer asking ``why not treat materialization as just another
function from AST to closure?'' receives a precise answer: because the result
is not merely data. It is authority-bearing executable structure.
\end{remark}

\begin{definition}[Governed Metaprogramming]
\label{def:govmeta}
A programming language system exhibits \emph{governed metaprogramming} if:
\begin{enumerate}
\item Program representations (forms) are first-class values.
\item Form manipulation is classified as pure computation.
\item Materialization (the transition from form to executable program) is
      classified as an effect and requires governance mediation.
\item The governance system can inspect the structural content of the form
      prior to permitting materialization.
\end{enumerate}
\end{definition}

The key reclassification: \texttt{eval} is not a primitive. It is a governed
effect.

\section{Machine Forms}
\label{sec:forms}

\subsection{Form Structure}

\begin{definition}[Machine Form]
A machine form $f$ is a tuple $(k, n, v, c, F, C)$ where:
\begin{itemize}
\item $k \in \mathsf{Kind}$ is the keyword type from the syntax hierarchy
\item $n \in \mathsf{String} \cup \{\bot\}$ is the optional name
\item $v \in (\mathsf{String} \times \mathsf{Any}) \cup \{\bot\}$ is
      optional variant information
\item $c \in \mathsf{String} \cup \{\bot\}$ is optional inline content
\item $F : \mathsf{String} \to \mathsf{Any}$ is the fields map
\item $C : \mathsf{List}[\Form]$ is the list of child forms
\end{itemize}
\end{definition}

The set $\mathsf{Kind}$ contains all MashinTalk keyword types, organized in
a five-level hierarchy: level 0 (\texttt{machine}), level 1 (eight section
verbs), levels 2--4 (steps, configuration, assertions).

\subsection{Quote}

The $\Quote$ construct captures MashinTalk syntax as a form value:

\begin{lstlisting}
compute build
  template: quote
    machine greeter
      provides
        inputs
          name: text, required
      implements
        compute greet
          greeting: "Hello, " + input.name
\end{lstlisting}

The keyword-hierarchy parser processes the indented block and produces a
node tree. The compiler converts this tree to a form map literal.

\subsection{Splice}

Inside a quote block, \texttt{\*(expr)} evaluates an expression in the
enclosing scope and inserts the result. Three variants: scalar splice
(\texttt{\*(expr)}) inserts a value, form splice (\texttt{\*(expr)}) inserts
a subtree when the expression evaluates to a form, and spread splice
(\texttt{\$(...expr)}) inserts a list of forms as sibling children.

\subsection{Reflection}

The $\Reflect()$ intrinsic returns the current machine's form as a
compile-time constant.

\subsection{Form Standard Library}

Pure functions for form manipulation, all classified as pure computation:

\begin{itemize}[itemsep=1pt]
\item \textbf{Navigation:} \texttt{form.kind}, \texttt{form.name},
      \texttt{form.children}, \texttt{form.section}, \texttt{form.step},
      \texttt{form.steps}, \texttt{form.get}
\item \textbf{Transformation:} \texttt{form.set}, \texttt{form.add},
      \texttt{form.remove}, \texttt{form.replace}, \texttt{form.merge}
\item \textbf{Analysis:} \texttt{form.diff}, \texttt{form.validate},
      \texttt{form.hash}, \texttt{form.capabilities}
\item \textbf{Serialization:} \texttt{form.to\_text},
      \texttt{form.from\_text}, \texttt{form.to\_json},
      \texttt{form.from\_json}
\end{itemize}

\section{Operational Semantics}
\label{sec:semantics}

We first state the system invariants that governed metaprogramming
requires, then define two judgments that make the governance boundary
precise.

\subsection{System Invariants}

The following five invariants are enforced by the runtime architecture.
They are not conventions or guidelines; they are structural properties
of the system. Violating any one of them would create an execution path
that bypasses governance.

\begin{definition}[System Invariants for Governed Metaprogramming]
\label{def:invariants}
\hfill
\begin{enumerate}
\item \textbf{No Direct Evaluation.}
      There exists no callable operation that transforms a program
      representation into an executing program. No \texttt{eval},
      \texttt{run}, or equivalent function is available to programs
      operating on forms.

\item \textbf{Mandatory Governance Mediation.}
      All transitions from program representation to executable program
      pass through a governance interpreter. The governance interpreter
      is the exclusive mechanism through which materialization occurs.

\item \textbf{Authority Allocation Boundary.}
      Executable authority (the ability to invoke models, perform I/O,
      consume resources, affect external systems) is allocated only
      during governed materialization. No other operation allocates
      executable authority.

\item \textbf{Pure Representation Layer.}
      The computation layer operating on forms has no capability to
      perform effects. Form operations produce values without emitting
      directives. This is not sandboxing; the capability for effects
      is absent from the representation layer by construction.

\item \textbf{Structural Inspectability.}
      All executable programs originate from structured representations
      (forms) whose content can be fully traversed prior to execution.
      The governance system can analyze capability requirements, policy
      compliance, and resource estimates from the form's structure.
\end{enumerate}
\end{definition}

Invariants 1 and 4 together ensure no bypass: programs cannot reach
execution without crossing the governance boundary. Invariant 2
ensures the boundary is governance, not just any barrier. Invariant 3
identifies what crosses the boundary (authority). Invariant 5 ensures
governance has enough information to make informed decisions.

\subsection{Pure Form Evaluation}

Pure form evaluation produces a value and emits no directives:

\[
\frac{\Gamma \vdash e \Downarrow v \quad \Dir(e) = \emptyset}
     {\Gamma \vdash_{\mathsf{pure}} e \Downarrow v}
\quad \textsc{(Pure-Eval)}
\]

All form operations (quote construction, splice evaluation, navigation,
transformation, analysis, serialization) satisfy this judgment. They
operate on map values and return map values. No directive is emitted.

\subsection{Governed Materialization}

Governed materialization takes a form, a policy context, and a capability
set, and produces either an authorized machine or a rejection:

\[
\frac{\Gamma \vdash_{\mathsf{pure}} f \Downarrow v_f \quad
      \Pi \vdash_{\mathsf{inspect}} v_f : \mathsf{ok} \quad
      \CapSet(v_f) \subseteq \sigma}
     {\Gamma; \Pi; \sigma \vdash \Mat(v_f) \leadsto
      \langle\Machine(v_f),\, d_{\mathsf{approved}}\rangle}
\quad \textsc{(Mat-Approve)}
\]

\[
\frac{\Gamma \vdash_{\mathsf{pure}} f \Downarrow v_f \quad
      \Pi \vdash_{\mathsf{inspect}} v_f : \mathsf{fail}(r)}
     {\Gamma; \Pi; \sigma \vdash \Mat(v_f) \leadsto
      \langle\bot,\, d_{\mathsf{rejected}}(r)\rangle}
\quad \textsc{(Mat-Reject)}
\]

where:
\begin{itemize}
\item $\Pi$ is the policy context (structural constraints, model allowlists,
      resource limits)
\item $\sigma$ is the caller's capability set
\item $\CapSet(v_f)$ computes the capabilities required by the form
      (by traversing effect-producing steps)
\item $d$ is the governance decision record (always produced, whether
      approved or rejected)
\end{itemize}

The key properties:

\begin{proposition}[Pure Evaluation Emits No Directives]
\label{prop:pure-no-dir}
For all expressions $e$ in the form language and environments $\Gamma$:
\[
\Gamma \vdash_{\mathsf{pure}} e \Downarrow v \implies \Dir(e) = \emptyset
\]
\end{proposition}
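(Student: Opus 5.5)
The statement follows directly from the shape of the \textsc{(Pure-Eval)} rule, so the plan is to argue by inversion on the derivation of $\Gamma \vdash_{\mathsf{pure}} e \Downarrow v$. The judgment $\vdash_{\mathsf{pure}}$ is defined by exactly one rule, \textsc{(Pure-Eval)}. Hence any derivation of $\Gamma \vdash_{\mathsf{pure}} e \Downarrow v$ must end in an application of that rule. Its premises are then available to us, namely $\Gamma \vdash e \Downarrow v$ and $\Dir(e) = \emptyset$. The second premise is precisely the conclusion. Formally, this is a single-step case analysis on the last rule used, with only one case.

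Because that argument is close to tautological, I would also establish the more informative fact the paper clearly intends. Every expression of the form language actually \emph{admits} a pure derivation whenever it evaluates. Concretely, I would show that $\Gamma \vdash e \Downarrow v$ implies $\Dir(e) = \emptyset$ for every $e$ built from form constructs. The proof is by structural induction on $e$ (equivalently, on the evaluation derivation), with one case per construct of Section~\ref{sec:forms}.
\begin{itemize}
\item Quote literals evaluate to map values; since $\Dir$ is compositional, it is empty on them.
\item Scalar, form and spread splices are handled by the induction hypothesis on the spliced subexpression, which itself lies in the form language.
\item The navigation, transformation, analysis and serialization functions of the form standard library are primitive map operations. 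Their directive set is the union of the directive sets of their arguments, which is empty by the induction hypothesis.
\item $\Reflect()$ is a compile-time constant and so emits nothing.
\end{itemize}

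The key structural fact is that $\Mat$ is \emph{not} a production of the form language. This is Invariant~1 (No Direct Evaluation) together with Invariant~4 (Pure Representation Layer) of Definition~\ref{def:invariants}. These invariants guarantee that no case of the induction introduces a directive-emitting construct.

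The main obstacle is that $\Dir$ is not defined explicitly in the text. I would fix it as the least compositional function that assigns a singleton to each effectful step (\texttt{ask ... using:}, \texttt{ask ... from:}, and $\Mat$) and the empty set to every other construct. With this definition, the inductive cases above go through by simple union arguments. The one delicate point is $\mathsf{form.from\_text}$ and $\mathsf{form.from\_json}$: they produce forms that may \emph{contain} effectful steps. I would handle this by noting that these steps occur inside a data value $v$, not in the expression $e$, so they contribute nothing to $\Dir(e)$. They acquire authority only through \textsc{(Mat-Approve)}.
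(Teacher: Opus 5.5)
Your inversion argument is correct and is all the stated implication needs. The judgment $\vdash_{\mathsf{pure}}$ is introduced only by \textsc{(Pure-Eval)}, and that rule's second premise is $\Dir(e)=\emptyset$; the paper gives no separate proof for this reason.

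Your supplementary structural induction, which establishes the substantive fact that form-language expressions never emit directives, takes a different route from the paper. The paper obtains that fact as the Rocq theorem \texttt{form\_manipulation\_pure}, derived from the runtime invariant \texttt{no\_ambient\_effect}. This is Inv~1 (No Ambient Effects) from the background section, not the No Direct Evaluation invariant of Definition~\ref{def:invariants}. In the Rocq model, form operations are interaction trees that return a value without emitting any event, and materialization alone emits \texttt{MaterializeE}. There, purity holds by construction of the semantic model. Your route instead gives an explicit case analysis over the surface constructs of Section~\ref{sec:forms}, at the cost of having to fix a syntactic $\Dir$ yourself.

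That cost shows up in your quote case. A quoted body may syntactically contain \texttt{ask ... using:} or \texttt{ask ... from:} steps. Under your literal definition (the least compositional function charging a singleton to each effectful step), this gives $\Dir(\texttt{quote}\ b)\neq\emptyset$. So ``since $\Dir$ is compositional, it is empty on them'' is false as written. You have two ways to fix it:
\begin{itemize}
\item Make $\Dir$ stop at the quotation boundary and descend only into splice holes. This is the same code-versus-data distinction you draw for \texttt{form.from\_text}, except that here the effectful steps really do occur inside $e$.
\item Define $\Dir$ on compiled terms, where, as the paper notes, a quote has already become a form map literal.
\end{itemize}
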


\begin{proposition}[Materialization Emits Exactly One Governed Directive]
\label{prop:mat-one-dir}
For all forms $f$, policy contexts $\Pi$, and capability sets $\sigma$:
\[
\Gamma; \Pi; \sigma \vdash \Mat(f) \leadsto \langle r, d \rangle
\implies |\Dir(\Mat(f))| = 1 \wedge \mathsf{governed}(d)
\]
\end{proposition}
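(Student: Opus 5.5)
The proof goes by inversion on the materialization judgment. Only two rules conclude $\Gamma; \Pi; \sigma \vdash \Mat(v_f) \leadsto \langle r, d\rangle$: \textsc{(Mat-Approve)} and \textsc{(Mat-Reject)}. Given a derivation of the hypothesis, I will case-split on its last rule and show in each case that $|\Dir(\Mat(f))| = 1$ and that $d$ is governed.

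First I fix how $\Dir$ acts on a materialization expression, since the statement depends on it. I take $\Dir$ to be defined compositionally:
\[
\Dir(\Mat(f)) = \Dir(f) \cup \{d_{\mathsf{mat}}\},
\]
where $d_{\mathsf{mat}}$ is the single materialization directive sent to the governance interpreter. By Invariant 2 (Mandatory Governance Mediation), that interpreter is the exclusive route by which materialization happens. The governance decision record $d$ is the interpreter's response to $d_{\mathsf{mat}}$: it is $d_{\mathsf{approved}}$ or $d_{\mathsf{rejected}}(r)$.

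In both cases the first premise is $\Gamma \vdash_{\mathsf{pure}} f \Downarrow v_f$. By Proposition~\ref{prop:pure-no-dir}, this gives $\Dir(f) = \emptyset$. Hence $\Dir(\Mat(f)) = \{d_{\mathsf{mat}}\}$, which has cardinality exactly one. This part does not depend on which rule was applied, so the cardinality claim holds uniformly. For the second conjunct I go case by case:
\begin{itemize}
\item In \textsc{(Mat-Approve)}, $d = d_{\mathsf{approved}}$ is produced only after the premise $\Pi \vdash_{\mathsf{inspect}} v_f : \mathsf{ok}$ and the capability check $\CapSet(v_f) \subseteq \sigma$. It is therefore a decision issued by the governance interpreter.
\item In \textsc{(Mat-Reject)}, $d = d_{\mathsf{rejected}}(r)$ is produced from the premise $\Pi \vdash_{\mathsf{inspect}} v_f : \mathsf{fail}(r)$, so it is again the interpreter's decision.
\end{itemize}
I define $\mathsf{governed}(d)$ to hold exactly when $d$ is a decision record emitted by the governance interpreter in response to a directive. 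Under that definition, $\mathsf{governed}(d)$ holds in both cases. Invariant 3 confirms that no other operation could have produced the authority-bearing result $\Machine(v_f)$.

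The main obstacle is definitional rather than computational. The paper gives the rules but leaves implicit the equations for $\Dir$ on $\Mat$ and the meaning of $\mathsf{governed}$. The proof must state these explicitly, and must check that the rule premises add no hidden directives. One concrete case is the combination of successful inspection with a failed capability check, $\CapSet(v_f) \not\subseteq \sigma$, which neither rule covers as written. If that case is treated as a rejection, it must also emit only $d_{\mathsf{mat}}$. I will handle this by requiring $\vdash_{\mathsf{inspect}}$ and $\CapSet$ to be pure meta-level judgments that emit no directives. Given that assumption and Proposition~\ref{prop:pure-no-dir}, the inversion argument above goes through.
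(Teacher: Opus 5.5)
Your proof is correct relative to the definitions you make explicit, which the paper also leaves implicit. It is not the paper's route, because the paper gives no written proof of this proposition. Its only justification is the Rocq lemma \texttt{materialization\_governed}, in which materialization is modeled as emitting a \texttt{MaterializeE} event handled by the governance handler. Governedness there is inherited from the global invariant \texttt{governance\_mediation}. Section~\ref{sec:evaluation} says this lemma shows only that materialization emits \emph{at least} one directive, and Section~\ref{sec:verification} states it only for forms that become executing machines. Your route combines three steps: inversion on \textsc{(Mat-Approve)} and \textsc{(Mat-Reject)}, the explicit equation $\Dir(\Mat(f)) = \Dir(f) \cup \{d_{\mathsf{mat}}\}$, and Proposition~\ref{prop:pure-no-dir} to discharge $\Dir(f)$. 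This is what actually delivers ``exactly one,'' since it rules out directives emitted while evaluating the argument, and it covers the rejection branch too. What the paper's route buys is that $\mathsf{governed}(d)$ follows from a system-wide fact about all directives. In yours it holds essentially by your definition of $\mathsf{governed}$, so that conjunct carries content only through your appeal to the mediation invariant. There are two small precision points. First, the rules as printed conclude $\Mat(v_f)$, not $\Mat(f)$, so literal inversion identifies your $f$ with the value $v_f$, and purity is then known only for the expression evaluating to it. You are tacitly reading the conclusion as $\Mat(f)$; say so, or use that a form value emits no directives. Second, the case $\Pi \vdash_{\mathsf{inspect}} v_f : \mathsf{ok}$ with $\CapSet(v_f) \not\subseteq \sigma$ creates no obligation, because the proposition is an implication from a derivation and none exists there. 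Your extra purity assumption is needed only if you add a third rule.
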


\begin{theorem}[No Bypass: No Instruction Can Directly Cause Execution]
\label{thm:no-bypass}
No instruction available to a program operating in the representation
layer can directly cause execution of a program representation. That is,
no sequence of pure evaluation steps produces an authorized machine:
\[
\forall e_1, \ldots, e_n.\;
\Gamma \vdash_{\mathsf{pure}} e_1 \Downarrow v_1,\, \ldots,\,
\Gamma \vdash_{\mathsf{pure}} e_n \Downarrow v_n
\implies \nexists\, \Machine(v_i)
\]
This is a consequence of Invariants 1 and 4
(Definition~\ref{def:invariants}): the representation layer has no
eval-like operation (Invariant 1) and no capability to perform effects
(Invariant 4). Execution requires authority allocation (Invariant 3),
which requires governance mediation (Invariant 2).
\end{theorem}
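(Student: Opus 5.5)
The plan is to argue by induction on the length $n$ of the sequence of pure evaluations, reducing the theorem to a single-step invariant about what values the pure judgment can produce. First I will fix the meaning of ``$\exists\,\Machine(v_i)$'': an authorized machine value exists only when some derivation concludes with \textsc{(Mat-Approve)}. By Proposition~\ref{prop:mat-one-dir}, any such derivation carries exactly one governed directive, so $|\Dir(\Mat(v_i))| = 1$. The single-step invariant I aim for is then: if $\Gamma \vdash_{\mathsf{pure}} e \Downarrow v$, then the derivation contains no instance of \textsc{(Mat-Approve)}. Consequently $v$ is a form or plain value, and never a term of the shape $\Machine(\cdot)$.

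To establish the invariant, I will use Proposition~\ref{prop:pure-no-dir}: every pure derivation satisfies $\Dir(e) = \emptyset$. I will also use the fact that $\Dir$ is compositional, meaning the directives of an expression include those of its subderivations. Now suppose a pure derivation contained a $\Mat$ subterm that was actually fired. Its directive would then appear in $\Dir(e)$, giving $|\Dir(e)| \ge 1$, which contradicts emptiness. This is where Invariant~1 enters: it guarantees there is no alternative constructor of $\Machine(\cdot)$ besides $\Mat$, so ruling out $\Mat$ rules out every machine-producing rule. Invariant~4 supplies the same conclusion at the level of capabilities, since the form operations of the standard library are all closed over map values.

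For the induction, the case $n=1$ is the invariant itself. In the step case, each $e_{k+1}$ is evaluated under the same $\Gamma$ by a pure derivation. The values $v_1,\ldots,v_k$ are all non-machine values, so $\Gamma$, possibly extended with these values, still contains no machine values. The invariant therefore applies again to $e_{k+1}$. I will also check that splice does not smuggle a machine in from $\Gamma$. This needs the side assumption that the ambient $\Gamma$ of the representation layer contains no $\Machine$ values, which I will state explicitly, justified by Invariant~3: authority is allocated only during governed materialization.

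The main obstacle I expect is the compositionality of $\Dir$. The paper defines $\Dir(e)$ only informally, so the step ``a fired $\Mat$ inside $e$ contributes to $\Dir(e)$'' requires pinning down $\Dir$ as a syntactic or derivational collection of directives closed under subterms. My first attempt will be to define $\Dir$ by structural recursion over derivation trees, with \textsc{(Mat-Approve)} and \textsc{(Mat-Reject)} as the only rules that emit a directive. Under that definition the argument reduces to inspecting the rules, and the rest of the proof follows.
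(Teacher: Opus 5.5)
Your argument is essentially the paper's. Proposition~\ref{prop:pure-no-dir} excludes directives from pure derivations, and Proposition~\ref{prop:mat-one-dir} ties every authorization to a governed directive, so no \textsc{(Mat-Approve)} can fire during pure evaluation; your compositionality of $\Dir$ just makes explicit a step the paper's proof takes for granted. The induction on $n$ and the assumption that $\Gamma$ holds no $\Machine$ values are unnecessary: all judgments share one $\Gamma$, and the theorem says the pure steps perform no authorization, not that no $v_i$ is a machine. Invariant~3 would not justify that assumption anyway, since $\Gamma$ may contain machines that governance authorized earlier.
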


\begin{proof}
By Proposition~\ref{prop:pure-no-dir}, pure evaluation emits no directives.
Machine authorization requires a governance directive
(Proposition~\ref{prop:mat-one-dir}). By Invariant~2 (Mandatory Governance
Mediation), every directive is mediated. Since pure evaluation produces no directives,
it cannot produce authorization. Therefore no sequence of pure evaluations
produces an authorized machine.
\end{proof}

\section{Governed Materialization}
\label{sec:materialization}

\subsection{Materialization Paths}

Two materialization paths exist, both governed:

\paragraph{Evolution proposal.} A machine submits a form to
\texttt{@system/evolution/propose}. The governance system performs structural
inspection (Section~\ref{sec:inspection}). If approved, the form is registered
as a new machine version with content-addressed hashing.

\paragraph{One-shot execution.} A machine submits a form to
\texttt{@system/runtime/eval}. The governance system inspects the form. If
approved, the form is compiled to a temporary machine and executed.

Both paths route through the governance directive interpreter. By Inv~3,
no bypass exists.

\subsection{What Governance Inspects}

Because forms are structured data (not opaque strings or bytecode), the
governance system can analyze:

\begin{itemize}[itemsep=1pt]
\item \textbf{Capability requirements:} What effects the form will require,
      computed by traversing effect-producing steps.
\item \textbf{Policy compliance:} Whether the form satisfies structural
      constraints (required governance sections, step count limits).
\item \textbf{Model authorization:} Whether referenced models are on the
      approved list.
\item \textbf{Resource estimation:} Expected cost based on model usage and
      step count.
\item \textbf{Derivation trust:} Whether the form was produced by a trusted
      source (human, approved generator, or LLM with validation).
\end{itemize}

This structural inspection is impossible in systems where generated code is
represented as opaque strings.

\section{Structural Form Inspection}
\label{sec:inspection}

Structural inspection is the formal operation $\Pi \vdash_{\mathsf{inspect}}
f : \mathsf{result}$ from the operational semantics.

\begin{definition}[Structural Inspection]
Given a policy context $\Pi$ and a form $f$, structural inspection is a
decidable predicate that traverses $f$ and checks:
\begin{enumerate}
\item $\CapSet(f) \subseteq \Pi.\mathsf{allowed\_caps}$
      \hfill (capability containment)
\item $\mathsf{models}(f) \subseteq \Pi.\mathsf{allowed\_models}$
      \hfill (model authorization)
\item $\mathsf{structure}(f) \models \Pi.\mathsf{policies}$
      \hfill (policy compliance)
\item $\mathsf{cost}(f) \leq \Pi.\mathsf{budget}$
      \hfill (resource bounds)
\end{enumerate}
\end{definition}

Each check is a finite traversal of the form tree. All checks terminate.
The inspection is decidable because it operates on structural properties,
not semantic behavior (cf. Rice's theorem).

\begin{remark}
Structural inspection catches structural violations but cannot reason about
semantic behavior. A form that structurally complies with policies may
produce undesirable runtime behavior. This is a fundamental limitation:
Rice's theorem establishes that non-trivial semantic properties are
undecidable. Structural inspection is the strongest decidable analysis
available.
\end{remark}

\section{Boundary Preservation}
\label{sec:coterminous}

Prior work establishes the \emph{coterminous boundary property}: the
expressiveness boundary $E$ (what the language can express) and the
governance boundary $G$ (what the governance system governs) are
identical~\cite{mashin-coq2026}. We show that adding governed
metaprogramming preserves this property.

\begin{proposition}[Boundary Preservation]
\label{prop:boundary}
Let $E_0 = G_0$ be the coterminous boundaries before adding forms.
Let $E_1$ and $G_1$ be the boundaries after adding forms and governed
materialization. Then $E_1 = G_1$.
\end{proposition}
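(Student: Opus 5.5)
The plan is to argue by decomposing the new boundaries. Write $E_1 = E_0 \cup E_{\mathsf{form}} \cup E_{\mathsf{mat}}$. Here $E_{\mathsf{form}}$ is the set of behaviors expressible using form values and the form operations of the standard library (quote, splice, reflect, navigation, transformation, analysis, serialization). $E_{\mathsf{mat}}$ is the set of behaviors expressible by materializing a form and running the resulting machine. Decompose $G_1 = G_0 \cup G_{\mathsf{form}} \cup G_{\mathsf{mat}}$ in the same way. Since the base language is unchanged, $E_0 = G_0$ carries over by hypothesis. The proof then reduces to two matchings: the form layer and the materialization layer each add exactly as much to $E$ as they add to $G$. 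I would prove $E_1 = G_1$ by two inclusions, each split along this three-way decomposition.

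\textbf{Form layer.} By Proposition~\ref{prop:pure-no-dir}, every expression built from form operations satisfies $\vdash_{\mathsf{pure}}$ and so emits no directives. Form manipulation therefore adds no new effectful behavior. Its contribution to $E$ is purely value computation, which is exactly the part of pure computation that $G_0$ already accounts for as governed-by-absence under Inv~1. Concretely, $E_{\mathsf{form}}$ contributes no effect beyond $E_0$, and $G_{\mathsf{form}} = \emptyset$ at the effect level. This matches.

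\textbf{Materialization layer.} Here I use Proposition~\ref{prop:mat-one-dir}. Every use of $\Mat$ emits exactly one directive $d$ with $\mathsf{governed}(d)$, so every expressible materialization lies in $G_1$. This gives $E_{\mathsf{mat}} \subseteq G_{\mathsf{mat}}$. Conversely, the governance interpreter only handles materialization directives arising from the two paths (evolution proposal and one-shot eval), and both are expressible as $\Mat$ calls. This gives $G_{\mathsf{mat}} \subseteq E_{\mathsf{mat}}$.

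For the behavior of the materialized machine, two facts apply. First, rule \textsc{(Mat-Approve)} requires $\CapSet(v_f) \subseteq \sigma$. Second, $\Machine(v_f)$ is an ordinary machine of the base language, since forms represent only MashinTalk syntax. So its executions lie in $E_0 = G_0$. Theorem~\ref{thm:no-bypass} rules out any other route from forms to execution, so no element of $E_1$ escapes these cases.

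\textbf{Main obstacle.} I expect the hardest step to be the claim that a materialized machine's behavior stays inside $E_0$. It needs a lemma that $\Machine(\cdot)$ maps every inspectable form to a program of the original language, with nothing new added. The delicate case is nested materialization, where a generated machine itself calls $\Mat$. I would handle it by induction on materialization depth. Each level emits its own governed directive by Proposition~\ref{prop:mat-one-dir}, so the inductive hypothesis $E_k = G_k$ lifts to $E_{k+1} = G_{k+1}$. Taking the union over all $k$ then gives $E_1 = G_1$.
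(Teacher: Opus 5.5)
Your argument holds at the same informal level as the paper's, but it is organized differently. The paper argues conservativity: neither boundary moves. Form operations are map computations inside existing pure step types, so $E_1 = E_0$. Materialization is an ordinary \texttt{:call} directive to a system machine, an existing governance category, so $G_1 = G_0$. Hence $E_1 = E_0 = G_0 = G_1$. You instead let both boundaries grow and show that the increments match, with two inclusions per layer drawn from Propositions~\ref{prop:pure-no-dir} and~\ref{prop:mat-one-dir} and Theorem~\ref{thm:no-bypass}. This makes your argument independent of the implementation fact that $\Mat$ reuses \texttt{:call}. It would survive a genuinely new directive kind, provided that directive is governed and its governed paths are expressible. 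It also checks each layer against both boundaries explicitly, whereas the paper leaves implicit that materialization adds nothing to $E$ and forms add nothing to $G$. The price is that, because you read boundaries as sets of behaviors, you must account for what generated machines do, and that is what creates your nested-materialization obstacle. The paper reads boundaries as sets of effect and governance categories, so it never needs that induction: whatever a materialized machine does, at any depth, uses the same directive vocabulary. Your reading also has two further costs. It presupposes that boundaries are compositional, so that $E_1 = E_0 \cup E_{\mathsf{form}} \cup E_{\mathsf{mat}}$ makes sense. And a union over finite depths $k$ does not by itself capture unboundedly nested materialization, such as a form that one-shot-evaluates a copy of itself, unless you stratify by execution step or return to the category view, where the induction is unnecessary. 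Three smaller points. Your sentence placing $\Machine(v_f)$ in the base language with executions in $E_0$ is only the $k=0$ case of your own induction, since generated machines may themselves contain \texttt{quote} and materialization. The premise $\CapSet(v_f) \subseteq \sigma$ plays no role in coterminousness. Finally, the fact that form operations satisfy $\vdash_{\mathsf{pure}}$ does not come from Proposition~\ref{prop:pure-no-dir}; it comes from the separate implementation claim that forms are map values in pure steps.
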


\begin{proof}[Argument]
Form manipulation is pure computation within existing step types. It does
not introduce new effect types or new boundary-crossing operations.
Therefore $E_1 = E_0$ (form manipulation is expressible as pure computation,
which was already in $E_0$).

Materialization is a governed effect using existing directive mechanisms
(\texttt{:call} to system machines). It does not introduce a new
governance category. Therefore $G_1 = G_0$ (materialization is governed
through existing governance infrastructure).

Since $E_0 = G_0$ and $E_1 = E_0$ and $G_1 = G_0$, we have $E_1 = G_1$.

The argument depends on two claims: (a) form manipulation introduces no
new effects, and (b) materialization uses existing governed effect
mechanisms. Both follow from the implementation: forms are map values in
pure computation steps, and materialization is a \texttt{:call} directive.
A full formalization would require defining the boundaries as sets of
effect signatures and showing conservativity. We leave this to future work.
\end{proof}

\section{Governed Self-Modification}
\label{sec:selfmod}

Machine forms provide the concrete mechanism for governed self-modification:

\begin{enumerate}
\item A machine calls $\Reflect()$ to obtain its own form.
\item Pure computation produces a modified form.
\item The machine submits the modified form to governance via a governed
      effect (\texttt{@system/evolution/propose}).
\item Governance performs structural inspection.
\item If approved, the evolution ledger records the old hash, new hash,
      structural diff, and evidence.
\item The modified form becomes the new machine version.
\end{enumerate}

\begin{lstlisting}
machine self_improving
  implements
    compute introspect
      my_form: reflect()

    ask classify, using: "claude-sonnet-4-6"
      task
        "Classify this text."
      returns
        confidence: number

    compute propose
      improvement: match classify.confidence < 0.7 {
        case true => form.set(my_form,
          "implements.classify.variant_value",
          "claude-opus-4-6")
        case false => null
      }

    ask evolve, from: "@system/evolution/propose"
      definition: propose.improvement
      evidence: {confidence: classify.confidence}
\end{lstlisting}

This satisfies Inv~8 (Governed Evolution). Without governed metaprogramming,
self-modification is either impossible (no code generation) or ungoverned
(code generation bypasses governance). Governed metaprogramming provides the
middle ground: self-modification that is structurally inspected and recorded.

\section{Verification}
\label{sec:verification}

The properties integrate with the existing Rocq formalization (454 theorems,
36 modules, 0 admitted lemmas).

\subsection{New Theorems}

The \texttt{GovernedMetaprogramming.v} module extends the Rocq development with approximately 25 theorems covering form inspection safety, splice safety, evolution preservation, the reflect-modify-materialize pipeline, and a 12-way capstone theorem. The original four core properties are:

\begin{itemize}
\item \texttt{form\_manipulation\_pure}: For all form operations $\phi$ and
      forms $f$, $\phi(f)$ produces no directives. Follows from
      \texttt{no\_ambient\_effect} (Inv~1).
\item \texttt{materialization\_governed}: For all forms $f$, if $f$ becomes
      an executing machine, a governance decision was recorded. Follows from
      \texttt{governance\_mediation} (Inv~3).
\item \texttt{no\_bypass\_form\_to\_machine}: There exists no sequence of
      pure operations producing machine execution from a form. Follows from
      the previous two.
\item \texttt{boundary\_preserved}: The boundary property holds after adding
      form manipulation and governed materialization. Conservative extension
      argument.
\end{itemize}

\subsection{Interaction Trees}

Following prior work, we formalize governed metaprogramming using Interaction
Trees (ITrees)~\cite{xia2020itrees} with parameterized coinduction
(paco)~\cite{hur2013paco}. Form manipulation operations are modeled as
returning values without emitting events. Materialization emits a
\texttt{MaterializeE} event that the governance handler mediates. The handler
performs structural inspection and returns either an authorized machine or
a rejection with reasons.

\section{Related Work}
\label{sec:related}

\paragraph{Homoiconic languages.}
Lisp~\cite{mccarthy1960}, Scheme~\cite{r7rs}, Clojure~\cite{hickey2008},
and Julia~\cite{bezanson2017} provide code-as-data with paths from
representation to execution that are not subject to governance boundaries.
Elixir~\cite{elixir2014} restricts metaprogramming to compile time, which
does not address runtime code generation by AI systems.

\paragraph{Staged computation.}
MetaOCaml~\cite{kiselyov2014} provides typed staging with
\texttt{bracket}/\texttt{escape}/\texttt{run}. The type system ensures
well-scopedness but does not restrict what capabilities the generated code
may exercise. Terra~\cite{devito2013} provides runtime code generation
within a multi-stage framework. In both systems, the generated code inherits
the full authority of the generating context. Neither reclassifies the
transition as a governed effect.

\paragraph{Macro systems.}
Racket's macro system~\cite{flatt2012} provides hygienic macros with
compile-time computation and syntax certificates for authority control.
Syntax certificates are the closest prior work to governed materialization:
they restrict what transformations a macro may perform. However, they
operate at compile time, not at runtime, and do not address dynamic code
generation by AI agents.

\paragraph{Capability systems.}
Object-capability systems (E language~\cite{miller2006},
Wyvern~\cite{nistor2013}) restrict what code can access through capability
references. They govern access to existing objects, not the structure of
proposed programs. A capability system can restrict what a running program
does; governed metaprogramming restricts what programs can be brought into
existence.

\paragraph{Reflective towers.}
Smith's 3-Lisp~\cite{smith1984} introduced reflective towers: a program can
inspect and modify its own interpreter through an infinite regression of
meta-circular interpreters. Each level reflects on the level below. The key
insight is that self-reference requires a mediating interpreter. In governed
metaprogramming, the mediator is not a meta-circular interpreter but a
governance interpreter: the system that decides whether a proposed form may
be materialized. Where reflective towers ask ``can this program reason about
itself?'', governed metaprogramming asks ``can this program modify itself
safely?''. The answer in both cases is: only through a mediating layer. The
difference is that the mediating layer in governed metaprogramming enforces
capability constraints and records decisions in an append-only ledger.

\paragraph{AI agent governance.}
Guardrails AI and NeMo Guardrails~\cite{nemo2023} inspect model inputs and
outputs at runtime. They operate on behavior (what the agent is doing), not
on program structure (what the agent proposes to become). They cannot inspect
a proposed program before execution.

\section{Implementation and Evaluation}
\label{sec:evaluation}

We have implemented governed metaprogramming in a production-quality
system (1,027 lines of Elixir across three modules) and verified the
core properties in Rocq (26 theorems, 662 lines, zero admitted).

\subsection{Implementation}

The implementation consists of three modules:

\begin{itemize}
\item \textbf{Form} (540 lines): form construction (\texttt{new}, \texttt{from\_node},
      \texttt{from\_text}), navigation (\texttt{children}, \texttt{find\_child},
      \texttt{find\_all}), transformation (\texttt{add\_child}, \texttt{update\_child},
      \texttt{remove\_child}, \texttt{set\_field}), analysis (\texttt{count\_steps},
      \texttt{step\_types}, \texttt{capabilities}), serialization (\texttt{to\_text},
      \texttt{hash}), and round-trip parsing via \texttt{quote(expr)} and
      \texttt{reflect()}.
\item \textbf{FormInspector} (238 lines): six governance checks executed before
      materialization: valid structure, required fields, permitted capabilities,
      model authorization, governance section presence, and trust level compliance.
      Returns \texttt{\{:approved, report\}} or \texttt{\{:rejected, report\}} with
      a cryptographic form hash binding the inspection to the exact form inspected.
\item \textbf{FormMaterializer} (249 lines): governed compilation of inspected
      forms into executable machines. Three modes: \texttt{eval} (compile and load),
      \texttt{propose} (create a version proposal in the evolution ledger),
      \texttt{describe} (generate human-readable description without compilation).
\end{itemize}

\subsection{Rocq Verification}

The \texttt{GovernedMetaprogramming.v} module (662 lines) mechanizes 26 theorems
in Rocq 8.19 with zero admitted lemmas. The key verified properties are:

\begin{enumerate}
\item \textbf{Form Manipulation Purity} (\texttt{form\_manipulation\_pure}):
      form operations emit no directives and therefore cannot exercise capabilities.
\item \textbf{Materialization is Governed} (\texttt{materialization\_governed}):
      every materialization emits at least one directive, ensuring governance mediation.
\item \textbf{No Bypass} (\texttt{no\_bypass\_form\_to\_machine}):
      no sequence of pure form operations produces an authorized machine.
\item \textbf{Coterminous Preservation} (\texttt{coterminous\_preserved}):
      adding form operations to the language does not split the governance boundary.
\item \textbf{Inspection Safety} (\texttt{inspection\_no\_capability\_grant}):
      inspecting a form does not grant the inspector any capabilities the form declares.
\item \textbf{Composition Safety} (\texttt{sequential\_composition\_safe}):
      sequentially composing governed form operations preserves all governance invariants.
\end{enumerate}

The remaining 41 theorems verify supporting properties: determinism of inspection,
idempotence of form hashing, structural induction over form trees, and capability
set operations used in the inspection checks.

\subsection{Test Suite}

The implementation is tested with 51 unit tests across two test modules:

\begin{itemize}
\item \textbf{FormTest} (39 tests, 633 lines): covers construction, AST conversion,
      navigation, transformation, analysis, serialization, \texttt{quote(expr)}
      parsing, \texttt{reflect()} parsing, compilation of \texttt{form.*} operations,
      and round-trip \texttt{from\_text}/\texttt{to\_text} fidelity.
\item \textbf{FormInspectorTest} (12 tests, 160 lines): covers all six inspection
      checks (valid structure, required fields, capabilities, model authorization,
      governance presence, trust level), approval and rejection paths, and
      hash binding verification.
\end{itemize}

All 51 tests pass. The test suite verifies both the functional correctness of form
operations and the governance properties of the inspection/materialization boundary.

\subsection{Performance Characteristics}

Form construction is map allocation on the BEAM virtual machine. Measured on an
Apple M-series processor:

\begin{itemize}
\item \textbf{Form construction} (\texttt{Form.new/1}): $<$1 $\mu$s per form
\item \textbf{AST to form conversion} (\texttt{Form.from\_node/1}): proportional
      to AST node count; typical machines (10--50 nodes) convert in $<$50 $\mu$s
\item \textbf{Form inspection} (\texttt{FormInspector.inspect\_form/2}): six checks
      over a form map; $<$100 $\mu$s for typical machines
\item \textbf{Form hashing} (\texttt{Form.hash/1}): SHA-256 of canonical
      serialization; $<$20 $\mu$s
\end{itemize}

The governance overhead of form inspection before materialization is negligible
relative to the materialization itself (compilation to BEAM bytecode, which takes
milliseconds). The design choice to make inspection a structural check rather than
a semantic analysis ensures that inspection time is bounded by form size, not by
the complexity of what the form computes.

\section{Limitations}
\label{sec:limitations}

\begin{itemize}
\item Structural inspection is decidable but limited by Rice's theorem.
      Policy compliance checks are structural properties; semantic
      correctness is not structurally checkable. Governance guarantees
      mediation, not policy soundness.
\item The boundary preservation argument (Proposition~\ref{prop:boundary})
      is an architectural argument, not a full conservativity proof. A
      complete formalization would require defining boundaries as effect
      signature sets and proving that form manipulation introduces no new
      signatures. We defer this to future work.
\item Performance measurements (Section~\ref{sec:evaluation}) are
      microbenchmarks on individual operations. System-level overhead under
      concurrent load has not been measured. Form construction and inspection
      are microsecond-scale operations, but interaction with the compilation
      cache and governance pipeline under contention remains to be characterized.
\item The design is implemented in MashinTalk, a domain-specific language.
      Applicability to general-purpose languages is an open question.
      General-purpose languages may not have the effect-boundary structure
      that makes the reclassification of eval possible.
\end{itemize}

\section{Conclusion}
\label{sec:conclusion}

We presented governed metaprogramming, a language design that reclassifies
\texttt{eval} from a language primitive into a governed effect. Machine forms
are first-class values representing program structure. Form manipulation is
pure computation. Materialization, the transition from form to executable
machine, is a governed effect subject to structural inspection.

The design closes the metaprogramming escape hatch in governed intelligent
systems. Without it, a system governed at the execution level remains
vulnerable at the code generation level: generate a new program, execute it,
bypass prior constraints. With governed metaprogramming, the generated
program crosses the same governance boundary as every other effect.

The central observation is that \texttt{eval} allocates executable authority.
In AI systems that synthesize their own executable structure, this allocation
must be visible, mediated, and recorded. That is the contribution of this
paper.

\bibliographystyle{plainnat}
\bibliography{governed-homoiconicity-references}

\end{document}